\newtheorem{prp}{Proposition}
\newtheorem{exm}{Example}
\newcommand{\splitatcommas}[1]{%
	\begingroup
	\begingroup\lccode`~=`, \lowercase{\endgroup
		\edef~{\mathchar\the\mathcode`, \penalty0 \noexpand\hspace{0pt plus 1em}}%
	}\mathcode`,="8000 #1%
	\endgroup
}
\begin{document}
\title{A New Algorithm for Computing Branch Number of Non-Singular Matrices over Finite Fields}
\titlerunning{A New Algorithm for Computing Branch Number}
%


\author{
P.R. Mishra \inst{1} \orcidlink{0000-0003-0473-6616}\and
Yogesh Kumar \inst{1} \orcidlink{0000-0001-8495-102X}\and
Susanta Samanta \inst{2} \orcidlink{0000-0003-4643-5117} \and
Atul Gaur \inst{3} \orcidlink{0000-0002-8723-4541}
}

\authorrunning{Mishra et al.}
%
\institute{
Scientific Analysis Group, DRDO, Metcalfe House Complex, Delhi-110054 \\ \email{\{prasanna.r.mishra,adhana.yogesh\}@gmail.com} 
\and R. C. Bose Centre for Cryptology and Security, Indian Statistical Institute, Kolkata-700108 \\ \email{susanta.math94@gmail.com} \and
Department of Mathematics, University of Delhi, Delhi-110007 \\ \email{gaursatul@gmail.com}
}
\maketitle              

\setcounter{footnote}{0} 

\begin{abstract}
The notion of branch number of a linear transformation is crucial for both linear and differential cryptanalysis. The number of non-zero elements in a state difference or linear mask directly correlates with the active S-Boxes. The differential or linear branch number indicates the minimum number of active S-Boxes in two consecutive rounds of an SPN cipher, specifically for differential or linear cryptanalysis, respectively. This paper presents a new algorithm for computing the branch number of non-singular matrices over finite fields. The algorithm is based on the existing classical method but demonstrates improved computational complexity compared to its predecessor. We conduct a comparative study of the proposed algorithm and the classical approach, providing an analytical estimation of the algorithm's complexity. Our analysis reveals that the computational complexity of our algorithm is the square root of that of the classical approach.

\keywords{Branch number \and Diffusion matrices\and Non-singular matrices\and Finite field}
\end{abstract}

\section{Introduction}
Claude Shannon, in his seminal paper ``Communication Theory of Secrecy Systems'' \cite{shan} introduced the concepts of confusion and diffusion, which are crucial properties for any cryptographic primitive. In cryptographic primitives, achieving the diffusion property is typically realized through the incorporation of a linear layer, often represented as a matrix. This matrix is designed to induce a significant alteration in the output for even minor changes in the input.  The strength of the diffusion layer is usually assessed by its branch number, and the optimal branch number is achieved by the use of MDS matrices.

In~\cite{JDA_Thesis_1995}, Daemen introduced the concept of the branch number of a linear transformation as a measure of its diffusion. The branch number can be categorized into two types: differential branch number and linear branch number. The differential (or linear) branch number of a matrix $M$ is defined as the smallest sum of the Hamming weight of the input vector $x$ and the output vector $Mx$ (or its transpose $M^T$). The notion of branch number for linear transformations is crucial for both linear and differential cryptanalysis. It provides the minimum number of active S-boxes we can expect in any valid differential (or linear) trail over two rounds. Specifically, for any two-round differential (or linear) trail, the number of active S-boxes is lower bound by the differential (or linear) branch number. Thus, a higher branch number implies better diffusion and therefore stronger resistance against differential cryptanalysis~\cite{biham1991differential} and linear cryptanalysis~\cite{matsui1993linear}.

The optimal branch number of MDS matrices makes them a preferred choice for designing diffusion layers in many block ciphers and hash functions. An example of such utilization is seen in the MixColumns operation of AES~\cite{AES}. Furthermore, MDS matrices are widely applied in various cryptographic primitives, including stream ciphers like MUGI~\cite{MUGI} and hash functions such as Maelstrom~\cite{MAELSTROM}, Gr$\phi$stl~\cite{GROSTL}, and PHOTON~\cite{PHOTON}. This widespread application underscores the effectiveness of MDS matrices within diffusion layers, which prompts the development of various techniques for their design. We refer to \cite{kcz} for various constructions on MDS matrices.

On the other hand, Near-MDS matrices with sub-optimal branch numbers provide a better balance between security and efficiency as compared to MDS matrices. Several lightweight block ciphers, such as PRIDE \cite{PRIDE}, Midori \cite{MIDORI}, MANTIS \cite{SKINNY}, FIDES \cite{Fides}, and PRINCE \cite{PRINCE}, have leveraged Near-MDS matrices. The importance of lightweight symmetric key primitives with attributes such as low power consumption, minimal energy usage, or reduced latency is growing, and Near-MDS matrices are commonly employed in their construction. However, Near-MDS matrices have received comparatively less attention in the existing literature. Some relevant studies include \cite{Gupta2023direct,NMDS_Gupta2024,Li_Wang_2017,NMDS_code_2022,NMDS_code_2022_2}.

We can determine whether a matrix of order $n$ is MDS or Near-MDS simply by checking its branch number. For MDS matrices, both differential and linear branch numbers are $n+1$, while for Near-MDS matrices, they are $n$. Therefore, one might want to explore specific techniques or algorithms for efficiently checking the branch number of a matrix. Moreover, it's worth noting that the differential branch number of a matrix $A$ is equal to the minimum distance of a linear code $\mathcal{C}$ generated by the matrix $[I~|~A]$. Additionally, the linear branch number is equivalent to the minimum distance of the dual code $\mathcal{C}^{\perp}$ of $\mathcal{C}$. Therefore, finding efficient techniques for calculating the branch number not only impacts the design and analysis of symmetric key cryptographic algorithms but also provides insights for linear error-correcting codes.

One method to determine the branch number of a matrix is by a straightforward approach, which involves finding the smallest sum of the Hamming weights of the input vector $x$ and the output vector $Mx$. In~\cite[Appendix A]{guo2016constructing}, Guo et al. noted that the branch number $d$ of a binary non-singular matrix $M$ of order $n$ can be identified by searching for the minimum value of $w_h(x) + w_h(M x)$ and $w_h(x) + w_h(M^{-1} x)$ among the input vectors with weights up to $d/2$, where $d \leq n+1$ and $w_h(x)$ represents the Hamming weight of the vector $x$. However, the paper lacks a proof or detailed explanation of this observation.

In this paper, we extend the concept introduced by Guo et al. to encompass non-singular matrices over any finite field $\mathbb{F}_q$. Additionally, we provide a mathematical foundation for our method of computing the branch number of a non-singular matrix $M$ of order $n$ over $\mathbb{F}_q$. Moreover, as an enhancement, we demonstrate that it is unnecessary to consider all input vectors with weights up to $d/2$. Instead, only a few select vectors are required. Specifically, we illustrate that we can compute the branch number by focusing on the equivalence classes. Therefore, we can determine the branch number by calculating over the representatives of these equivalence classes, thus improving the efficiency of our approach. Additionally, we analyze the time complexity of our proposed algorithm and identify specific instances of non-singular matrices where the complexity can be further reduced.

\noindent This paper is structured as follows: Section~\ref{sec:definition} covers the preliminary concepts necessary for understanding our work. We present our proposed algorithm for computing the branch number of non-singular matrices over a finite field $\mathbb{F}_q$ in Section~\ref{sec:proposed_algorithm}. In Section~\ref{sec:computational_complexity}, we delve into the computational complexity of our algorithm compared to the direct approach. Additionally, we demonstrate that our proposed algorithm requires nearly the square root of the field multiplications needed for the direct approach. Finally, we conclude the paper in Section~\ref{sec:conclusion}.

\section{Preliminaries}\label{sec:definition}
In this section, we discuss some definitions and mathematical preliminaries that are important in our context. 

Let $\mathbb{F}_{p^m}$ represent a finite field of order $p^m$, where $p$ is a prime and $m$ is a positive integer. We denote the multiplicative group of the finite field $\mathbb{F}_{p^m}$ by $\mathbb{F}_{p^m}^*$. The set of vectors of length $n$ with entries from the finite field $\mathbb{F}_{p^m}$ is denoted by $\mathbb{F}_{p^m}^n$. The Hamming weight of a vector $x=(x_1,\ldots,x_n)\in \mathbb{F}_{p^m}^n$, denoted by $w_h(x)$, is the total number of non-zero components in the vector $x$, i.e., $w_h(x)=|\{ i \in \{1, 2, \ldots, n\} : x_i \neq 0\}|$. 

An $n\times n$ matrix, or a matrix of order $n$ over $\mathbb{F}_{p^m}$, is considered non-singular if the determinant of the matrix is a non-zero element in $\mathbb{F}_{p^m}$. Now we define the notions of differential and linear branch numbers. 

\begin{definition}~\cite{JDA_Thesis_1995}\label{Def_diff_branch}
The differential branch number, $\mathcal B_d(M)$, of a matrix $M$ of order $n$ over the finite field $\mathbb{F}_{q}$ is defined as the smallest number of non-zero components in both the input vector $x$ and the output vector $Mx$, as we consider all non-zero $x$ in $\mathbb{F}_{q}^n$ i.e.

    \begin{equation*}
    	\begin{aligned}
                \mathcal B_d(M)=\min_{x\neq 0} \set{w_h(x)+w_h(Mx)}.
    	\end{aligned}
    \end{equation*}
\end{definition}

\begin{definition}~\cite{JDA_Thesis_1995,AES}\label{Def_lin_branch}
The linear branch number, $\mathcal B_l(M)$, of a matrix $M$ of order $n$ over the finite field $\mathbb{F}_{q}$ is defined as the smallest number of non-zero components in both the input vector $x$ and the output vector $M^Tx$, as we consider all non-zero $x$ in $\mathbb{F}_{q}^n$ i.e.
    \begin{equation*}
    	\begin{aligned}
    		\mathcal B_l(M)=\min_{x\neq 0} \set{w_h(x)+w_h(M^Tx)}.
    	\end{aligned}
    \end{equation*}
\end{definition}

\begin{remark}\cite[Page $144$]{AES}\label{Remark_min_dis_is_branch_number}
It is important to note that the differential branch number $\mathcal{B}_d(M)$ of a matrix $A$ equals the minimum distance of a linear code $\mathcal{C}$ generated by the matrix $[I~|~A]$. Furthermore, $\mathcal{B}_l(M)$ is equivalent to the minimum distance of the dual code $\mathcal{C}^{\perp}$ of $\mathcal{C}$.
\end{remark}

\begin{remark}\cite[Page $132$]{AES}\label{Remark_max_branch}
It is noteworthy that the maximum value for both $\mathcal{B}_d(M)$ and $\mathcal{B}_l(M)$ is $n + 1$. Although $\mathcal{B}_d(M)$ and $\mathcal{B}_l(M)$ may not always be equal, a matrix with the highest possible differential or linear branch number will have the same value for both.
\end{remark}

\noindent In this paper, our focus lies on the differential branch number of a matrix $M$, which we will simply refer to as the branch number and denote as $\mathcal{B}(M)$.

\section{The New Algorithm for Computation of Branch Number}  \label{sec:proposed_algorithm}
In this section, we propose a new algorithm for computing the branch number of a non-singular matrix over a finite field. The algorithm builds upon existing methods but demonstrates improved computational complexity by leveraging techniques from linear algebra and finite field arithmetic. The following theorem ensures that it is unnecessary to consider vectors of all weights in the computation of the branch number.

\begin{theorem}\label{Thm_The_Branch_Number_nonsingular_matrix}
The branch number of an invertible matrix $M\in M_n(\mathbb{F}_q)$ is given as
\[\begin{split}
\mathcal B(M)=\min\left\{\min\left\{h(M,x),h(M^{-1},x)\right\}\mid x\in \mathbb{F}_q^n, 1\leq w_h(x)\leq \left\lfloor \frac{n+1}{2}\right\rfloor\right\}, 
\end{split}\]
where $h(M,x)=w_h(x)+w_h(Mx)$.
\end{theorem}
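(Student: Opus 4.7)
The plan is to show the two inequalities $\mathcal{B}(M)\le R$ and $R\le \mathcal{B}(M)$, where $R$ denotes the right-hand side of the claimed formula. The key preliminary observation I will use throughout is the identity $\mathcal{B}(M)=\mathcal{B}(M^{-1})$, which follows from the substitution $y=Mx$: as $x$ ranges over all nonzero vectors, so does $y$, and $w_h(x)+w_h(M^{-1}x)=w_h(My)+w_h(y)$. This gives me the freedom to swap an input vector with its image under $M$ at will.

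For the direction $\mathcal{B}(M)\le R$, I will simply note that every vector $x$ appearing in the right-hand minimum is nonzero, so by definition $h(M,x)\ge \mathcal{B}(M)$ and $h(M^{-1},x)\ge \mathcal{B}(M^{-1})=\mathcal{B}(M)$. Hence each term on the right is bounded below by $\mathcal{B}(M)$.

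The nontrivial direction is $R\le \mathcal{B}(M)$. Pick a nonzero $x^{*}$ attaining the branch number, and set $y^{*}=Mx^{*}$, so that $\mathcal{B}(M)=w_h(x^{*})+w_h(y^{*})$. I will split into cases on whether $\min\{w_h(x^{*}),w_h(y^{*})\}$ is at most $\lfloor (n+1)/2\rfloor$. If $w_h(x^{*})\le \lfloor(n+1)/2\rfloor$, then $x^{*}$ is admissible in the right-hand minimum and $h(M,x^{*})=\mathcal{B}(M)$. If instead $w_h(y^{*})\le \lfloor(n+1)/2\rfloor$, then $y^{*}$ is admissible and, since $M^{-1}y^{*}=x^{*}$, we have $h(M^{-1},y^{*})=w_h(y^{*})+w_h(x^{*})=\mathcal{B}(M)$.

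The only remaining obstacle, and the one place where the bound $\lfloor(n+1)/2\rfloor$ is actually forced, is ruling out the case where \emph{both} $w_h(x^{*})$ and $w_h(y^{*})$ strictly exceed $\lfloor(n+1)/2\rfloor$. Here I will invoke Remark~\ref{Remark_max_branch}, which gives $\mathcal{B}(M)\le n+1$. In this hypothetical case each weight would be at least $\lfloor(n+1)/2\rfloor+1$, so their sum would be at least $2\lfloor(n+1)/2\rfloor+2$, which is $n+2$ when $n$ is even and $n+3$ when $n$ is odd. Either value strictly exceeds $n+1$, contradicting the upper bound on $\mathcal{B}(M)$. I anticipate this parity check to be the most delicate part of the write-up; everything else is bookkeeping built on the $M\leftrightarrow M^{-1}$ symmetry.
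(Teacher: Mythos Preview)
Your proof is correct, and it rests on the same two ingredients as the paper's: the upper bound $\mathcal{B}(M)\le n+1$ (Remark~\ref{Remark_max_branch}) to exclude the case where both $w_h(x^{*})$ and $w_h(Mx^{*})$ exceed $\lfloor(n+1)/2\rfloor$, and the bijection $y=Mx$ to trade a constraint on $w_h(Mx)$ for one on $w_h(y)$ with $M$ replaced by $M^{-1}$. The difference is organizational rather than mathematical. The paper proceeds through a chain of set manipulations (Equations~(\ref{Thm_branch_number_eqn:1})--(\ref{Thm_branch_number_eqn:6})): it partitions the range of $w_h(x)$, then sub-partitions the large-$w_h(x)$ piece by the size of $w_h(Mx)$, discards the both-large piece, inserts a redundant both-small piece, merges, and finally substitutes. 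Your two-inequality argument, with a case split on a single minimizer $x^{*}$, reaches the same conclusion more directly and with less bookkeeping; the paper's route, by contrast, keeps explicit track at every step of exactly which set of vectors is under consideration.
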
 

\begin{proof}
To begin with, recall that for an invertible matrix $M$  in $M_n(\mathbb{F}_q )$, where $n>1$, the branch number $\mathcal B(M)$ of $M$ is given as
\[ \mathcal B(M)=\min\{h(M,x)\mid x\in \mathbb{F}_q^n, x \neq 0\}.\]

As $ x\neq 0 \Rightarrow w_h(x)\neq 0$. Consequently, we may write 
\[\mathcal B(M)=\min\{h(M,x)\mid x\in \mathbb{F}_q^n, 1\leq w_h(x)\leq n\}.\]
	
\noindent We partition the set $\left\{1,\ldots,n\right\}$ in two parts viz. $\left\{1,\ldots,\left\lfloor \frac{n+1}{2}\right\rfloor \right\}$ and $\left\{\left\lfloor \frac{n+1}{2}\right\rfloor +1,\right.\\
\left.\ldots,n\right\}$ to compute $\mathcal B(M)$ as

\begin{equation}\label{Thm_branch_number_eqn:1}
    \begin{split}
        \mathcal B(M)=\min\left\{\min\left\{h(M,x)\mid x\in \mathbb{F}_q^n, 1\leq w_h(x)\leq \left\lfloor \frac{n+1}{2}\right\rfloor\right\},\right.\\
        \left.\min\left\{h(M,x)\mid x\in \mathbb{F}_q^n, \left\lfloor \frac{n+1}{2}\right\rfloor <w_h(x)\leq n\right\}\right\}. 
    \end{split}
\end{equation}

Next, we divide the second term on the right-hand side of (\ref{Thm_branch_number_eqn:1}) into the cases where $w_h(Mx)\leq \floor{\frac{n+1}{2}}$ and $w_h(Mx)> \floor{\frac{n+1}{2}}$. Therefore, we have

\begin{equation}\label{Thm_branch_number_eqn:2}
    \begin{split}
        &\min\left\{h(M,x)\mid x\in \mathbb{F}_q^n, \left\lfloor \frac{n+1}{2}\right\rfloor < w_h(x)\leq n\right\}\\
        =&\min\left\{\min\left\{h(M,x)\mid x\in \mathbb{F}_q^n, \left\lfloor \frac{n+1}{2}\right\rfloor<w_h(x)\leq n, w_h(Mx)\leq \left\lfloor \frac{n+1}{2}\right\rfloor \right\},\right.\\
        &\left.\min\left\{h(M,x)\mid x\in \mathbb{F}_q^n, \left\lfloor \frac{n+1}{2}\right\rfloor<w_h(x)\leq n, w_h(Mx)> \left\lfloor \frac{n+1}{2}\right\rfloor \right\}\right\}.
    \end{split}
\end{equation}

Note that for the second term of the right-hand side of Equation (\ref{Thm_branch_number_eqn:2}), $h(M,x)=w_h(x)+w_h(Mx)>2\left\lfloor \frac{n+1}{2}\right\rfloor+1 \geq n+1$. However, we know that the upper bound for $\mathcal B(M)$ is $n+1$. Thus, we conclude that the second term of the right-hand side of (\ref{Thm_branch_number_eqn:2}) will not contribute to the computation of the branch number.
\\

\noindent Therefore, from (\ref{Thm_branch_number_eqn:1}) and (\ref{Thm_branch_number_eqn:2}), we have
\begin{equation} \label{Thm_branch_number_eqn:3}
    \begin{split} 
        \mathcal B(M)=&\min\left\{\min\left\{h(M,x)\mid x\in \mathbb{F}_q^n, 1\leq w_h(x)\leq \left\lfloor \frac{n+1}{2}\right\rfloor\right\},\right.\\
        &\left.\min\left\{h(M,x)\mid x\in \mathbb{F}_q^n, \left\lfloor \frac{n+1}{2}\right\rfloor<w_h(x)\leq n, w_h(Mx)\leq \left\lfloor \frac{n+1}{2}\right\rfloor\right\}\right\}. 
    \end{split}
\end{equation}

\noindent Again, we note that
\begin{equation*}  
    \begin{split} 
        &\left\{h(M,x)\mid x\in \mathbb F_q^n, 1\leq w_h(x)\leq \left\lfloor \frac{n+1}{2}\right\rfloor, w_h(Mx)\leq \left\lfloor \frac{n+1}{2}\right\rfloor\right\} \subseteq\\
        &\left\{h(M,x)\mid x\in \mathbb F_q^n, 1\leq w_h(x)\leq \left\lfloor \frac{n+1}{2}\right\rfloor\right\}.
    \end{split}
\end{equation*}
Therefore, 
\begin{equation} \label{Thm_branch_number_eqn:4}
    \begin{split} 
        &\min \left\{h(M,x)\mid x\in \mathbb F_q^n, 1\leq w_h(x)\leq \left\lfloor \frac{n+1}{2}\right\rfloor\right\} \leq \\
        &\min \left\{h(M,x)\mid x\in \mathbb F_q^n, 1\leq w_h(x)\leq \left\lfloor \frac{n+1}{2}\right\rfloor, w_h(Mx)\leq \left\lfloor \frac{n+1}{2}\right\rfloor\right\}.
    \end{split}
\end{equation}
Note that the right-hand side of (\ref{Thm_branch_number_eqn:4}) is always greater than or equal to the left-hand side of (\ref{Thm_branch_number_eqn:4}). Therefore, if we include this extra term in (\ref{Thm_branch_number_eqn:3}), it will not affect the minimum value computed in (\ref{Thm_branch_number_eqn:3}).
\begin{equation} \label{Thm_branch_number_eqn:5}
    \begin{split}
        \mathcal B(M)=&\min\left\{\min\left\{h(M,x)\mid x\in \mathbb F_q^n, 1\leq w_h(x)\leq \left\lfloor \frac{n+1}{2}\right\rfloor\right\},\right.\\
        &\min\left\{h(M,x)\mid x\in \mathbb F_q^n, 1\leq w_h(x)\leq \left\lfloor \frac{n+1}{2}\right\rfloor, w_h(Mx)\leq \left\lfloor \frac{n+1}{2}\right\rfloor\right\},\\
        &\left.\min\left\{h(M,x)\mid x\in \mathbb F_q^n, \left\lfloor \frac{n+1}{2}\right\rfloor<w_h(x)\leq n, w_h(Mx)\leq \left\lfloor \frac{n+1}{2}\right\rfloor\right\}\right\}
    \end{split}
\end{equation}
By merging the second and third terms on the right-hand side of (\ref{Thm_branch_number_eqn:5}), we obtain 
\begin{equation*} 
    \begin{split}
        \mathcal B(M)=&\min\left\{\min\left\{h(M,x)\mid x\in \mathbb F_q^n, 1\leq w_h(x)\leq \left\lfloor \frac{n+1}{2}\right\rfloor\right\},\right.\\
        &\left.\min\left\{h(M,x)\mid x\in \mathbb F_q^n, 1\leq w_h(x)\leq n, w_h(Mx)\leq \left\lfloor \frac{n+1}{2}\right\rfloor\right\}\right\}. 
    \end{split}
\end{equation*}

Let $Mx=y$, then $x=M^{-1}y$ and $x\neq 0\iff y\neq 0\iff w_h(y)\geq 1.$\\
Then $h(M,x)=h(M^{-1},y)$ and

\begin{equation*}
    \begin{split}
        \mathcal B(M)=&\min\left\{\min\left\{h(M,x)\mid x\in \mathbb{F}_q^n, 1\leq w_h(x)\leq \left\lfloor \frac{n+1}{2}\right\rfloor\right\},\right.\\
        &\left.\min\left\{h(M^{-1},y)\mid x\in \mathbb{F}_q^n, 1\leq w_h(x)\leq n,  1\leq w_h(y)\leq \left\lfloor \frac{n+1}{2}\right\rfloor\right\}\right\}. 
    \end{split}
\end{equation*}

We may drop the condition $1\leq w_h(x)\leq n$ as this is a trivial condition for $x\neq 0.$ Note that the term $x\in \mathbb{F}_q^n$ may be replaced by $ y\in \mathbb{F}_q^n$ as the correspondence $x \to y$ is one-to-one. Therefore, we may write 

\begin{equation*}
    \begin{split}
        \mathcal B(M)=&\min\left\{\min\left\{h(M,x)\mid x\in \mathbb{F}_q^n, 1\leq w_h(x)\leq \left\lfloor \frac{n+1}{2}\right\rfloor\right\},\right.\\
        &\left.\min\left\{h(M^{-1},y)\mid y\in \mathbb{F}_q^n, 1\leq w_h(y)\leq \left\lfloor \frac{n+1}{2}\right\rfloor\right\}\right\}. 
    \end{split}
\end{equation*}

We may also change $y$ by $x$, and then we have 
\begin{equation*}
    \begin{split}
        \mathcal B(M)=&\min\left\{\min\left\{h(M,x)\mid x\in \mathbb{F}_q^n, 1\leq w_h(x)\leq \left\lfloor \frac{n+1}{2}\right\rfloor\right\},\right.\\
        &\left.\min\left\{h(M^{-1},x)\mid x\in \mathbb{F}_q^n, 1\leq w_h(x)\leq \left\lfloor \frac{n+1}{2}\right\rfloor\right\}\right\} 
    \end{split}
\end{equation*}
Or, 
\begin{equation}\label{Thm_branch_number_eqn:6}
    \begin{split}
        \mathcal B(M)=\min\left\{\min\left\{h(M,x),h(M^{-1},x)\right\}\mid x\in \mathbb{F}_q^n, 1\leq w_h(x)\leq \left\lfloor \frac{n+1}{2}\right\rfloor \right\}. 
    \end{split}
\end{equation}
This completes the proof. \qed
\end{proof} 

\noindent We now illustrate that it is unnecessary to consider all input vectors with weights up to $\floor{(n+1)/2}$. Instead, only a few select vectors are required. Specifically, we demonstrate that we can compute the branch number by concentrating on equivalence classes. Consequently, we can determine $\mathcal{B}(M)$ by calculating over the representatives of these equivalence classes, thereby enhancing the efficiency of our approach.

Define an equivalence relation $\sim$ in $\mathbb{F}_q^n$ as $a$ and $b$ in $\mathbb{F}_q^n$ are related if there exists $c\in \mathbb{F}_q^*$ such that $a=cb$, where $cb$ is the multiplication of the vector $b$ with a scalar $c$. For $a\in \mathbb{F}_q^n$, let $\bar{a}$ denote the  equivalence class of $a$.

Let $U_k= \{x \mid x \in \mathbb{F}_q^n, w_h(x)=k\}$ represent the set of all vectors of weight $k$ in $\mathbb{F}_q^n$. We now decompose $U_k$ into equivalence classes. The set of representative elements for each equivalence class of $U_k$ is denoted by $S_k$. The $S_k$ is defined as follows:
$$S_k =
\{x \in U_k \mid x=(a_1,a_2 \ldots a_n), \text{the first non-zero coordinate in }x \text{ is } 1\}.$$

\begin{lemma} \label{lem_representative_elements}
The following requirements are met by $U_k$ and $S_k$ for $k=1,2,\ldots n$:
\begin{enumerate}
\item[(i)] Given $x,y\in S_k, \bar{x}=\bar{y}\implies x=y$. In other words, any two distinct elements of $S_k$ belong to distinct equivalence classes. 
\item[(ii)] $U_k=\underset{x\in S_k}{{\bigcup}}\bar{x}$, where $\bar{x}$ denotes the equivalence class of $x$, i.e., every equivalence class contains an element of $S_k$.
\end{enumerate}
\end{lemma}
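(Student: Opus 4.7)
The plan is to verify both parts directly from the definition of $\sim$ and of $S_k$, using the single observation that scaling a vector by a nonzero scalar preserves its support (in particular its Hamming weight and the index of its first nonzero coordinate). I will set this up once at the start of the proof so that both parts follow without repeating the argument.

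For part (i), I would start with $x, y \in S_k$ satisfying $\bar{x} = \bar{y}$, so there exists $c \in \mathbb{F}_q^*$ with $x = c y$. Because multiplying by $c \in \mathbb{F}_q^*$ does not change the positions of nonzero coordinates, $x$ and $y$ have the same support, and in particular the same first nonzero coordinate, say at position $i$. By the definition of $S_k$, we have $x_i = 1$ and $y_i = 1$. Comparing the $i$-th coordinates of the identity $x = c y$ gives $1 = c \cdot 1 = c$, so $c = 1$ and hence $x = y$.

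For part (ii), the inclusion $\bigcup_{x \in S_k} \bar{x} \subseteq U_k$ is immediate since $S_k \subseteq U_k$ and each equivalence class stays inside $U_k$ (scaling preserves weight). For the other inclusion, I would take an arbitrary $z \in U_k$, let $i$ be the index of its first nonzero coordinate, set $c = z_i^{-1} \in \mathbb{F}_q^*$, and define $x = c z$. Then $x$ has the same support as $z$ (so $x \in U_k$) and $x_i = 1$, so $x \in S_k$; moreover $z = c^{-1} x \in \bar{x}$, establishing $z \in \bigcup_{x \in S_k} \bar{x}$.

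There is no real obstacle here; both claims reduce to the fact that the normalization $x \mapsto x_i^{-1} x$ (where $i$ is the first nonzero index) picks out a unique representative in each equivalence class lying inside $U_k$. The only thing to be slightly careful about is invoking, without proof, that nonzero scalar multiplication preserves Hamming weight and the location of the first nonzero entry, which is immediate since $\mathbb{F}_q$ is a field and hence $c a_j = 0 \iff a_j = 0$ for $c \in \mathbb{F}_q^*$.
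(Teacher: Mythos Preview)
Your proof is correct and follows essentially the same approach as the paper: both parts reduce to the fact that scaling by a nonzero scalar preserves the support, so that comparing the first nonzero coordinate forces $c=1$ in part~(i), and normalizing by the inverse of the first nonzero coordinate yields the representative in $S_k$ in part~(ii).
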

\begin{proof}
Let $x,y \in S_k$ be such that $x\neq y$ and suppose $x$ and $y$ belong to same equivalence class, i.e, $\exists$ $c\in \mathbb{F}_q^*$ such that $x=cy$. 
Let $x=(x_1,x_2,\ldots,x_n)$ and $y=(y_1,y_2,\ldots,y_n)$ and let $t$ be the least index such that $x_t\neq 0$ and $y_t\neq 0$. Then $x_t=1$ and $y_t=1$. Consequently, $x$ and $y$ can be written as
\begin{equation*}
x=(0,0,\ldots0,1,x_{t+1},\ldots,x_n),~y=(0,0,\ldots0,1,y_{t+1},\ldots,y_n).
\end{equation*}

As $x=cy$, we obtain  $(0,0,\ldots0,1,x_{t+1},\ldots,x_n)=(0,0,\ldots0,c,cy_{t+1},\ldots,cy_n)$. After comparing these, we get $c=1$, which further yields $x=y$.   This proves part $(i)$.

\noindent Next, we prove part $(ii)$, i.e., $U_k=\underset{x\in S_k}{{\bigcup}}\bar{x}$.

\begin{eqnarray*}
&& \text{Let } \alpha  \in \underset{x\in S_k}{{\bigcup}}\bar{x}\\
&\Rightarrow & \exists \;y \in S_k  \text{ such that }  \alpha  \in \bar{y}.\end{eqnarray*} This further implies that   \begin{eqnarray*}
&   & \alpha= cy, \text{where }  c  \in \mathbb{F}_q^* \\
& \Rightarrow & w_h(\alpha)= w_h(y)=k  \\
& \Rightarrow & \alpha \in  U_k  \\
& \Rightarrow &  \underset{x\in S_k}{{\bigcup}}\bar{x} \subseteq U_k.
\end{eqnarray*}

Now, let $\beta \in U_k$. Then $\beta \in  \mathbb{F}_q^n$ is such that $w_h(\beta)=k$.\\
Let $\beta =(\beta_1,\beta_2,\ldots,\beta_n)$ and $i$ be the least index for which $\beta _i$ is non-zero. 
Let $\beta_1=c\beta$, where $c=\beta_i^{-1}$. Then $\beta\sim \beta_1$ and $\beta_1=(0,0,\ldots,1,\beta_{i+1},\beta_{i+2},\ldots,\beta_n)$.
\begin{eqnarray*}
&& \text{ Clearly } \beta_1 \in S_k \text{ and } \beta\in\bar{\beta_1}.\\
&\Rightarrow & \beta \in \underset{x\in S_k}{{\bigcup}}\bar{x}\\
&\Rightarrow & U_k \subseteq \underset{x\in S_k}{{\bigcup}}\bar{x} .
\end{eqnarray*}
Therefore, we have that $U_k = \underset{x\in S_k}{{\bigcup}}\bar{x}$. This completes the proof. \qed
\end{proof}

\begin{theorem} \label{thm_branch_no}
The branch number of an invertible matrix $M\in M_n(\mathbb{F}_q)$ is given as
\[
\mathcal B(M)=\min\left\{\min\left\{h(M,x),h(M^{-1},x)\right\}\mid x\in  S_k, k=1,2,\ldots, \left\lfloor \frac{n+1}{2}\right\rfloor \right\}. 
\]
\end{theorem}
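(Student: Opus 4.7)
The plan is to derive this statement by combining Theorem~\ref{Thm_The_Branch_Number_nonsingular_matrix} with Lemma~\ref{lem_representative_elements}. Theorem~\ref{Thm_The_Branch_Number_nonsingular_matrix} already restricts the domain of optimization to vectors $x\in\mathbb{F}_q^n$ with $1\leq w_h(x)\leq \left\lfloor (n+1)/2\right\rfloor$, which is precisely the disjoint union $\bigcup_{k=1}^{\lfloor(n+1)/2\rfloor} U_k$. So the only work left is to argue that in each $U_k$ we may restrict attention to the representative set $S_k$ without changing the value of either $h(M,\cdot)$ or $h(M^{-1},\cdot)$.

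The key observation I would first isolate is the scalar invariance of $h$: for any $c\in\mathbb{F}_q^*$ and any $x\in\mathbb{F}_q^n$, the Hamming weight satisfies $w_h(cx)=w_h(x)$, because multiplication by a nonzero scalar preserves the zero/nonzero pattern of coordinates. Applying this both to $x$ and to $Mx$ (using linearity $M(cx)=c(Mx)$), one gets $h(M,cx)=w_h(cx)+w_h(M(cx))=w_h(x)+w_h(Mx)=h(M,x)$, and analogously $h(M^{-1},cx)=h(M^{-1},x)$. Hence the map $x\mapsto \min\{h(M,x),h(M^{-1},x)\}$ is constant on every equivalence class $\bar{x}$.

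Next, I would use part~(ii) of Lemma~\ref{lem_representative_elements} to write $U_k=\bigcup_{x\in S_k}\bar{x}$, so that
\[
\min_{x\in U_k}\min\{h(M,x),h(M^{-1},x)\}
=\min_{x\in S_k}\min_{y\in\bar{x}}\min\{h(M,y),h(M^{-1},y)\}
=\min_{x\in S_k}\min\{h(M,x),h(M^{-1},x)\},
\]
where the last equality uses the scalar invariance established above. Part~(i) of the lemma is not strictly needed for this minimum computation, but it justifies why $S_k$ is a genuinely economical set of representatives (no redundancy across classes).

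Finally, taking the outer minimum over $k=1,2,\ldots,\left\lfloor(n+1)/2\right\rfloor$ and invoking Theorem~\ref{Thm_The_Branch_Number_nonsingular_matrix} yields the claimed identity. There is no substantive obstacle here; the only subtlety worth double-checking is the scalar invariance step, which relies on the linearity of $M$ (and of $M^{-1}$) over $\mathbb{F}_q$ together with the fact that multiplication by a unit in $\mathbb{F}_q$ permutes $\mathbb{F}_q\setminus\{0\}$. Both are immediate, so the proof reduces to a clean bookkeeping argument built on top of the two previously established results.
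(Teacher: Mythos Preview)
Your proposal is correct and follows essentially the same approach as the paper: both invoke Theorem~\ref{Thm_The_Branch_Number_nonsingular_matrix} to restrict to $\bigcup_{k\le \lfloor (n+1)/2\rfloor} U_k$, use Lemma~\ref{lem_representative_elements}(ii) to decompose each $U_k$ into equivalence classes, and then apply the scalar invariance $h(M,cy)=h(M,y)$ (and likewise for $M^{-1}$) to pass from $U_k$ to the representative set $S_k$. Your remark that part~(i) of the lemma is not needed for the minimum computation is also accurate.
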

\begin{proof}
Using Lemma \ref{lem_representative_elements}, Equation (\ref{Thm_branch_number_eqn:6}) can be written as  
\begin{equation*}
\begin{split}
\mathcal B(M)=\min\left\{\min\left\{h(M,x),h(M^{-1},x)\right\}\mid x\in  U_k, k=1,2,\ldots, \left\lfloor \frac{n+1}{2}\right\rfloor \right\} \\ 
=\min\left\{\min\left\{h(M,x),h(M^{-1},x)\right\}\mid x \in  \underset{y\in S_k}{{\bigcup}}\bar{y}, k=1,2,\ldots,\left\lfloor \frac{n+1}{2}\right\rfloor\right\}. \end{split}
\end{equation*}\\
Given $1\leq k\leq \left\lfloor \frac{n+1}{2}\right\rfloor$ and $y\in S_k$, if $z\in \bar{y}$, then $\exists ~c\in \mathbb{F}_q^*$ such that $ z=cy$. Thus, we have 
\begin{equation*}
h(M,z)=h(M,cy)=h(M,y).   
\end{equation*}

Similarly,  $h(M^{-1},z)=h(M^{-1},y)$ for all $z\in \bar{y}$.\\

Hence, we deduce that
\begin{equation} \label{thm_branch_no_eqn}
\mathcal B(M)=\min\left\{\min\left\{h(M,x),h(M^{-1},x)\right\}\mid x\in  S_k, k=1,2,\ldots,\left\lfloor \frac{n+1}{2}\right\rfloor\right\}. 
\end{equation}

This completes the proof.\qed
\end{proof}

\begin{remark}
    It is worth mentioning that in the definition of $S_k$, the first non-zero coordinate of the elements is set to 1. However, this coordinate can be fixed to any non-zero value in $\mathbb{F}_q$. Setting the first non-zero coordinate of all representatives to 1 saves one multiplication during the algorithm's execution, as multiplying by 1 incurs no multiplication cost.
\end{remark}
 
\subsection{Description of the Proposed Algorithm}
We describe our algorithm as Algorithm \ref{Algorithm1}, which is essentially the algorithmic formulation of Theorem~\ref{thm_branch_no} with some additional filtering steps. Our proposed algorithm takes an invertible matrix and its size as input and computes its branch number. The notations used in the algorithm are consistent with those used in the theoretical treatment of the topic. 

\begin{algorithm} 
\caption[1]{Computation of branch number of a matrix over a finite field} \label{Algorithm1}
\begin{algorithmic}[1]
\Function{getbranchnumber}{$M,n$}
\State $\mathcal B(M)\gets n+1$ \Comment{Stores branch number of $M$}
\State $r\gets \lfloor n+1/2\rfloor$
\For{$(k\gets 1 \text{ to } r)$}
\While{$(S_k\neq \emptyset)$}
\State Choose $z\in S_k$ 
\State $w\gets$ weight of $(Mz)$ and $w'\gets$ weight of $(M^{-1}z)$ \label{mat_mul}
\If{$(w>w')$}
\State $w\gets w'$
\EndIf
\If{$(w+k<\mathcal B(M))$}
\State $\mathcal B(M)\gets w+k$
\EndIf
\If{$(\mathcal B(M)\leq r)$} \label{reduce_classes_1}
\State $r\gets \mathcal B(M)-1$ \label{reduce_classes_2}
\EndIf
\State $S_k\gets S_k\setminus \{z\}$
\EndWhile
\EndFor
\State\text{\bf return }$\mathcal B(M)$
\EndFunction
\end{algorithmic}
\end{algorithm}

\noindent In Example \ref{branch_no_exm}, we demonstrate how to calculate the branch number of an $8\times 8$ non-singular matrix over $\mathbb{F}_{2^8}$ using our proposed algorithm.

\begin{exm} \label{branch_no_exm}
Let $\mathbb{F}_{2^8}$ be the finite field generated by the primitive polynomial $x^8+x^4+x^3+x^2+1$. Now, consider the $8\times 8$ involutory MDS matrix $M$ over $\mathbb{F}_{2^8}$ provided in \cite{barreto2000khazad}.

\begin{center}
$M=\begin{pmatrix}
01_x & 03_x & 04_x & 05_x & 06_x & 08_x & 0B_x & 07_x \\
03_x & 01_x & 05_x & 04_x & 08_x & 06_x & 07_x & 0B_x \\
04_x & 05_x & 01_x & 03_x & 0B_x & 07_x & 06_x & 08_x \\
05_x & 04_x & 03_x & 01_x & 07_x & 0B_x & 08_x & 06_x \\
06_x & 08_x & 0B_x & 07_x & 01_x & 03_x & 04_x & 05_x \\
08_x & 06_x & 07_x & 0B_x & 03_x & 01_x & 05_x & 04_x \\
0B_x & 07_x & 06_x & 08_x & 04_x & 05_x & 01_x & 03_x \\
07_x & 0B_x & 08_x & 06_x & 05_x & 04_x & 03_x & 01_x	   
\end{pmatrix}$.
\end{center}
We find the branch number of matrix $M$ using our algorithm in the following steps.\\
First, we find the inverse of $M$ i.e. $M^{-1}=M$. Here, $r=\lfloor n+1/2\rfloor=\lfloor 9/2\rfloor=4$. For computing branch number of $M$, our search space is $\cup_{k=1}^{4} S_k$. The cardinality of $S_k$ represents the number of vectors in $\mathbb{F}_{2^8}^8$ of weight $k$ with the first non-zero coordinate equal to $1$, i.e. $|S_k|= \binom{8}{k}(2^8-1)^{k-1}$. Thus, we have \\

$\mathbf{|S_1|}=8$, $\mathbf{|S_2|}=28\times (2^8-1)$, $\mathbf{|S_3|}=56\times (2^8-1)^2$, and $\mathbf{|S_4|}=70\times (2^8-1)^3$.
\\
\noindent We need to search for $\min \set{ \min \set{w_h(x)+w_h(Mx)}, \min \set{w_h(x)+w_h(M^{-1}x)}}$ for the computation of branch number. Since $M^{-1}=M$, therefore, we search only for the minimum value of $w_h(x)+w_h(Mx)$ in $S_1$,~$S_2$,~$S_3$, and $S_4$.\\

\noindent In $\mathbf{S_1}$: ~$\min\left\{w_h(x)+w_h(Mx)\right\}=9$ \\
In $\mathbf{S_2}$: ~$\min\left\{w_h(x)+w_h(Mx)\right\}=9$ \\
In $\mathbf{S_3}$: ~$\min\left\{w_h(x)+w_h(Mx)\right\}=9$ \\
In $\mathbf{S_4}$: ~$\min\left\{w_h(x)+w_h(Mx)\right\}=9$. \\

Hence, the branch number of $M$ is $9$.
\end{exm}

\begin{remark}\label{Remark_Filtering_classes}
Steps \ref{reduce_classes_1} and \ref{reduce_classes_2} of Algorithm \ref{Algorithm1} are designed to filter some classes. If at any point the value of $\mathcal{B}(M)$ turns out to be less than or equal to $r$, the algorithm skips all classes with weights greater than or equal to $\mathcal{B}(M)$. This is because when $k$ equals $\mathcal{B}(M)$, the value of $w+k$ is greater than or equal to $\mathcal{B}(M)+1$, which is greater than $\mathcal{B}(M)$. While these steps do not alter the outcome of the algorithm, they contribute to reducing its average-case complexity.
\end{remark}

\noindent In the following example, we discuss the advantages of Steps \ref{reduce_classes_1} and \ref{reduce_classes_2} in Algorithm \ref{Algorithm1}.

\begin{exm} 
Let $\mathbb{F}_{2^8}$ be the finite field generated by the primitive polynomial $x^8+x^4+x^3+x^2+1$. Now, consider the $8\times 8$ matrix $M$ over $\mathbb{F}_{2^8}$.

\begin{center}
$M=\begin{pmatrix}
01_x & 02_x & 03_x & 04_x & 01_x & 02_x & 03_x & 07_x \\
02_x & 01_x & 04_x & 03_x & 02_x & 01_x & 07_x & 03_x \\
03_x & 04_x & 01_x & 02_x & 03_x & 07_x & 01_x & 02_x \\
04_x & 03_x & 02_x & 01_x & 07_x & 03_x & 02_x & 01_x \\
01_x & 02_x & 03_x & 07_x & 01_x & 02_x & 03_x & 04_x \\
02_x & 01_x & 07_x & 03_x & 02_x & 01_x & 04_x & 03_x \\
03_x & 07_x & 01_x & 02_x & 03_x & 04_x & 01_x & 02_x \\
07_x & 03_x & 02_x & 01_x & 07_x & 03_x & 02_x & 01_x	   
\end{pmatrix}$.
\end{center}

Here, $r= \lfloor n+1/2\rfloor=\lfloor 9/2\rfloor=4$, and initially, our search space is $\cup_{k=1}^{4} S_k$. The cardinality of $S_k$ represents the number of vectors in $\mathbb{F}_{2^8}^8$ of weight $k$ with the first non-zero coordinate equal to $1$, i.e. $|S_k|= \binom{8}{k}(2^8-1)^{k-1}$.

First, we find the inverse of $M$ i.e. 
\begin{center}
$M^{-1}=\begin{pmatrix}
00_x & 00_x & 00_x & f4_x & 00_x & 00_x & 00_x & f4_x \\
ce_x & 80_x & 1c_x & 00_x & ce_x & 80_x & e8_x & ba_x \\
e9_x & 9c_x & 80_x & 00_x & e9_x & 68_x & 80_x & 27_x \\
26_x & e9_x & ce_x & 00_x & d2_x & e9_x & ce_x & 9d_x \\
9d_x & 27_x & ba_x & f4_x&9d_x & 27_x & ba_x & 4e_x \\
ce_x & 80_x & e8_x & 00_x & ce_x & 80_x & 1c_x & ba_x \\
e9_x & 68_x & 80_x & 00_x & e9_x & 9c_x & 80_x & 27_x \\
d2_x & e9_x & ce_x & 00_x & 26_x & e9_x & ce_x & 9d_x	   
\end{pmatrix}$.
\end{center}

\noindent We need to search for $\min \set{ \min \set{w_h(x)+w_h(Mx)}, \min \set{w_h(x)+w_h(M^{-1}x)}}$. In $S_1$, we find $\min \set{ \min \set{w_h(x)+w_h(Mx)}, \min \set{w_h(x)+w_h(M^{-1}x)}}=3$, which is less than the initial value of $r=4$. According to Remark~\ref{Remark_Filtering_classes}, the updated value of $r$ becomes $3-1=2$, and thus, we need to search over $\cup_{k=1}^{2} S_k$.
\\
\noindent Now, in $S_2$, we have $\min \set{ \min \set{w_h(x)+w_h(Mx)}, \min \set{w_h(x)+w_h(M^{-1}x)}}=3$. Therefore, we conclude that the branch number of $M$ is $3$.
\end{exm}

\section{Complexity Analysis} \label{sec:computational_complexity}
In this section, we analyze the complexity of the proposed algorithm and compare it with that of the exhaustive approach. The algorithm is not memory-intensive; it requires both $M$ and its inverse to be stored in memory. For a matrix of order $n$, a maximum of $2n^2$ values should be stored. Although this is twice as large as what is required for the exhaustive approach, it is typically not a concern because we usually deal with matrices of small orders, such as 4, 8, 16, or 32. In other words, space complexity is not a significant issue here. Consequently, our analysis focuses on time complexity. 

The proposed Algorithm \ref{Algorithm1} is an algorithmic formulation of Equation (\ref{thm_branch_no_eqn}). The computationally dominant step in the algorithm is Step \ref{mat_mul}, which is repeated $\sum_{k=1}^{r}o(S_k)$ times, where $r=\lfloor n+1/2\rfloor$. Given $1\leq k\leq r$, step \ref{mat_mul} involves two matrix multiplications. This will take $2n(k-1)$ multiplications (we have $(k-1)$ in place of $k$ as the first non-zero coordinate in every vector of $S_k$ is always 1). 
Furthermore, note that $o(S_k)=\binom{n}{k}(q-1)^{k-1}$ as $S_k$ contains vectors of weight $k$ from $\mathbb{F}_q^n$ whose first non-zero coordinate is $1$.
Thus, the computational complexity is of the order of 
\begin{equation}\label{complexity_eqn}
2n\sum_{k=1}^{r}o(S_k)\cdot(k-1)=2n\sum_{k=1}^{\left\lfloor\frac{n+1}{2}\right\rfloor}\binom{n}{k}(k-1)(q-1)^{k-1}.
\end{equation}

Next, we demonstrate that this complexity is much lower than that required for the exhaustive approach. Before proceeding to the actual comparison, we establish some necessary results.

\begin{prp}\label{prp_complexity_naive}
The computational complexity of computing the branch number of an $n\times n$ matrix over $\mathbb{F}_q$ using the exhaustive approach is $O(n^2q^n)$.
\end{prp}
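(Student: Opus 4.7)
The plan is a direct counting argument. In the exhaustive approach, one enumerates every non-zero vector $x \in \mathbb{F}_q^n$ and, for each, evaluates $h(M,x) = w_h(x) + w_h(Mx)$, retaining the minimum. Since $|\mathbb{F}_q^n \setminus \{0\}| = q^n - 1$, the outer loop runs $q^n - 1$ times.

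For a fixed $x$, I would account for the cost of one iteration as follows. Computing the Hamming weight $w_h(x)$ requires scanning the $n$ coordinates of $x$, i.e.\ $O(n)$ comparisons with zero. Computing the product $Mx$ is the dominant step: each of the $n$ coordinates of $Mx$ is an inner product of a row of $M$ with $x$, which costs $n$ field multiplications and $n-1$ field additions, giving $n^2$ multiplications in total (plus $n(n-1)$ additions). Computing $w_h(Mx)$ then costs another $O(n)$ comparisons. Hence the per-iteration cost is $O(n^2)$ in field operations, with the matrix--vector product clearly dominating the weight evaluations.

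Multiplying the per-iteration cost by the number of iterations yields
\[
(q^n - 1) \cdot O(n^2) = O(n^2 q^n),
\]
which is the claimed bound.

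The step requiring the most care is simply fixing the cost model: one must make clear that the matrix--vector product dominates (since $n^2 \gg n$) and that no speed-up from reusing intermediate results is being exploited in the exhaustive approach. Once this is agreed upon, the proposition follows immediately from the two factors $q^n - 1$ and $n^2$.
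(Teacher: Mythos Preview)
Your proposal is correct and follows essentially the same approach as the paper: both identify the matrix--vector product as the dominant $O(n^2)$ step and multiply by the $q^n-1$ non-zero inputs to obtain $O(n^2 q^n)$. Your version is simply a bit more explicit in accounting for the Hamming-weight computations, which the paper omits as lower-order.
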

\begin{proof}
The exhaustive approach is based on the definition of branch number, i.e.,
\[\mathcal B(M)=\min\{w_h(x)+w_h(Mx)\mid x\in \mathbb{F}_q^n, x \neq 0\}.\]
The dominating step involved here is the matrix multiplication step. One matrix multiplication roughly takes $n^2$ field additions and $n^2$ field multiplications. The complexity of matrix multiplication is $O(n^2)$. To compute the minimum, total $q^n-1$ matrix multiplications are required. Thus, the complexity becomes $O(n^2q^n)$. \qed
\end{proof}

\begin{remark}
To the best of our knowledge, there is no algorithm in the literature for computing the branch number of any non-singular matrix over $\mathbb{F}_q$, except for the exhaustive approach. Therefore, our goal is to compare the complexity with that of an exhaustive approach. However, the complexity of the exhaustive approach appears quite different from the expression of the complexity of our approach, as given in (\ref{complexity_eqn}). So, it is not straightforward to compare the two complexities. We simplify the expression (\ref{complexity_eqn}) in Theorem \ref{complexity_thm}, and finally, we compare the simplified expression with the complexity of the exhaustive approach.
\end{remark}

\noindent In Theorem \ref{complexity_thm}, we simplify the expression (\ref{complexity_eqn}) for comparison with our algorithm. To do this, we require the following lemma:

\begin{lemma} \label{Lemma_combinatorial}
For $n\geq 1$ and $q>2$, we have
\begin{equation*}
\sum_{k=1}^{\lfloor\frac{n-1}{2}\rfloor}\binom{n}{k}(k-1)(q-1)^{k-1}\leq \binom{n}{\lfloor\frac{n+1}{2}\rfloor}\left \lfloor\frac{n-1}{2}\right \rfloor  (q-1)^{\lfloor\frac{n-1}{2}\rfloor}.
\end{equation*}
\end{lemma}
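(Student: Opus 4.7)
Set $m:=\lfloor(n-1)/2\rfloor$ and observe that $\lfloor(n+1)/2\rfloor=m+1$ in both parities of $n$. Let $a_k:=\binom{n}{k}(k-1)(q-1)^{k-1}$, so the target inequality becomes $\sum_{k=1}^{m} a_k\le \binom{n}{m+1}\,m\,(q-1)^{m}$. The $k=1$ summand vanishes, and for $m\le 1$ the left side is empty or zero while the right side is non-negative, so I may assume $m\ge 2$.

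The core idea is a geometric-sum collapse. I would compute the ratio
$$\frac{a_{k+1}}{a_k}=\frac{n-k}{k+1}\cdot\frac{k}{k-1}\cdot (q-1),$$
and observe that for $2\le k\le m-1$ one has $(n-k)/(k+1)\ge 1$ (because $k\le m\le (n-1)/2$), $k/(k-1)\ge 1$, and $q-1\ge 2$ (since $q$ is a prime power exceeding $2$). Hence $a_{k+1}\ge 2a_k$, so
$$\sum_{k=2}^{m} a_k \;\le\; a_m\sum_{j=0}^{m-2} 2^{-j} \;<\; 2a_m.$$
It remains to verify $2a_m \le \binom{n}{m+1}\,m\,(q-1)^m$. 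Using $\binom{n}{m+1}/\binom{n}{m}=(n-m)/(m+1)$ and cancelling $\binom{n}{m}(q-1)^{m-1}$, this reduces to $2(m-1)(m+1)\le m(n-m)(q-1)$. Since $m\le(n-1)/2$ forces $n-m\ge m+1$, and $q-1\ge 2$, the right-hand side is at least $2m(m+1)\ge 2(m^2-1)=2(m-1)(m+1)$, which closes the argument.

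The main obstacle is avoiding the temptation of a purely term-by-term estimate. Bounding each summand by $\binom{n}{m+1}(m-1)(q-1)^{m-1}$ and multiplying by the $m$ summands introduces a superfluous factor of roughly $m$ and only works when $q\gtrsim m$. The decisive leverage comes from $q-1\ge 2$, which converts the combinatorial growth of $\binom{n}{k}$ and the arithmetic factor $k-1$ into genuine geometric growth, so the entire sum is controlled by a bounded multiple of its final term $a_m$, and the remaining single-term comparison is elementary.
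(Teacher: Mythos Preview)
Your proof is correct and takes a genuinely different route from the paper's. The paper argues by (strong) induction on $n$, passing from the statement at $n-1$ to the statement at $n+1$ via the identity $\binom{n+1}{k}=\frac{(n+1)n}{(n+1-k)(n-k)}\binom{n-1}{k}$, then absorbing a surplus factor of $2$ into one extra power of $q-1\ge 2$ at the end. You avoid induction entirely: the ratio test shows the summands $a_k$ grow at least geometrically with ratio $\ge 2$ on the range $2\le k\le m$, so the whole sum is bounded by $2a_m$, and a single elementary comparison of $2a_m$ with the right-hand side finishes. Your approach is shorter and more transparent about \emph{why} the inequality holds (the sum is dominated by its final term); the paper's inductive proof is more mechanical and has to track floor functions carefully across the parity shift $n-1\to n+1$. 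Both arguments hinge on the same key fact $q-1\ge 2$, which follows from $q>2$ only because $q$ is an integer (a prime power in context) --- the paper's proof invokes this implication as well, so your assumption is consistent with theirs.
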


\begin{proof}
We prove the lemma by induction on $n$. Let $P(n)$ denote the statement for any integer $n\geq 1$. For $n=1$, both sides are zero. Thus, $P(1)$ is true. Now we assume that all $P(1),\ldots, P(n)$ are true. We aim to show that $P(n+1)$ is also true. 

We have
\begin{eqnarray*}	
&&\sum_{k=1}^{\lfloor\frac{n}{2}\rfloor}\binom{n+1}{k}(k-1)(q-1)^{k-1}\\
&=&\sum_{k=1}^{\lfloor\frac{n-2}{2}\rfloor}\binom{n+1}{k}(k-1)(q-1)^{k-1}+\binom{n+1}{\lfloor\frac{n}{2}\rfloor}\left \lfloor\frac{n-2}{2}\right\rfloor(q-1)^{\lfloor\frac{n-2}{2}\rfloor}\\
&=&\sum_{k=1}^{\lfloor\frac{n-2}{2}\rfloor}\frac{(n+1)n}{(n+1-k)(n-k)}\binom{n-1}{k}(k-1)(q-1)^{k-1}\\
&&+\binom{n+1}{\lfloor\frac{n}{2}\rfloor}\left\lfloor\frac{n-2}{2}\right\rfloor(q-1)^{\lfloor\frac{n-2}{2}\rfloor}.
\end{eqnarray*}

Note that $(n+1-k)(n-k)\geq (n+1-\lfloor\frac{n-2}{2}\rfloor)(n-\lfloor\frac{n-2}{2}\rfloor)$ for $k=1,2,\ldots \lfloor\frac{n-2}{2}\rfloor$. Thus, from above we have

\begin{eqnarray*}
&&\sum_{k=1}^{\lfloor\frac{n}{2}\rfloor}\binom{n+1}{k}(k-1)(q-1)^{k-1} \\
&\leq&\frac{(n+1)n}{(n+1-\lfloor\frac{n-2}{2}\rfloor)(n-\lfloor\frac{n-2}{2}\rfloor)}\sum_{k=1}^{\lfloor\frac{n-2}{2}\rfloor}\binom{n-1}{k}(k-1)(q-1)^{k-1} \\
&&+\binom{n+1}{\lfloor\frac{n}{2}\rfloor}\left\lfloor\frac{n-2}{2}\right\rfloor(q-1)^{\lfloor\frac{n-2}{2}\rfloor}.
\end{eqnarray*}

Since  $(n+1-\lfloor\frac{n-2}{2}\rfloor)(n-\lfloor\frac{n-2}{2}\rfloor) \geq (n+1-\lfloor\frac{n}{2}\rfloor)(n-\lfloor\frac{n}{2}\rfloor)$. Consequently, from above we may write 
\begin{eqnarray}
&&\sum_{k=1}^{\lfloor\frac{n}{2}\rfloor}\binom{n+1}{k}(k-1)(q-1)^{k-1} \notag \\
&\leq&\frac{(n+1)n}{(n+1-\lfloor\frac{n}{2}\rfloor)(n-\lfloor\frac{n}{2}\rfloor)}\sum_{k=1}^{\lfloor\frac{n-2}{2}\rfloor}\binom{n-1}{k}(k-1)(q-1)^{k-1} \notag \\
&&+\binom{n+1}{\lfloor\frac{n}{2}\rfloor}\left\lfloor\frac{n-2}{2}\right\rfloor(q-1)^{\lfloor\frac{n-2}{2}\rfloor} \label{Lemma_combinatorial_eqn_1}. 
\end{eqnarray}
Since by induction hypothesis $P(n-1)$ is true, we have
\begin{equation}\label{Lemma_combinatorial_eqn_2}
\sum_{k=1}^{\lfloor\frac{n-2}{2}\rfloor}\binom{n-1}{k}(k-1)(q-1)^{k-1}\leq \binom{n-1}{\lfloor\frac{n}{2}\rfloor}\left \lfloor\frac{n-2}{2}\right \rfloor  (q-1)^{\lfloor\frac{n-2}{2}\rfloor}.
\end{equation}

Therefore, from (\ref{Lemma_combinatorial_eqn_1}) and (\ref{Lemma_combinatorial_eqn_2}), we have  
\begin{eqnarray*}
&&\sum_{k=1}^{\lfloor\frac{n}{2}\rfloor}\binom{n+1}{k}(k-1)(q-1)^{k-1} \\
&\leq&\frac{(n+1)n}{(n+1-\lfloor\frac{n}{2}\rfloor)(n-\lfloor\frac{n}{2}\rfloor)} \binom{n-1}{\lfloor\frac{n}{2}\rfloor}\left\lfloor\frac{n-2}{2}\right\rfloor(q-1)^{\lfloor\frac{n-2}{2}\rfloor} \\
&&+\binom{n+1}{\lfloor\frac{n}{2}\rfloor}\left\lfloor\frac{n-2}{2}\right\rfloor(q-1)^{\lfloor\frac{n-2}{2}\rfloor} \\
&\leq&\binom{n+1}{\lfloor\frac{n}{2}\rfloor}\left\lfloor\frac{n-2}{2}\right\rfloor(q-1)^{\lfloor\frac{n-2}{2}\rfloor}+\binom{n+1}{\lfloor\frac{n}{2}\rfloor}\left\lfloor\frac{n-2}{2}\right\rfloor(q-1)^{\lfloor\frac{n-2}{2}\rfloor} \\
&\leq&2\binom{n+1}{\lfloor\frac{n}{2}\rfloor}\left\lfloor\frac{n-2}{2}\right\rfloor(q-1)^{\lfloor\frac{n-2}{2}\rfloor}. 
\end{eqnarray*}	
Note that $q>2 \implies q-1\geq 2$, and $\binom{n}{r}=\binom{n}{n-r}$, for any positive integer $r\leq n$. Thus, we have 
\begin{eqnarray}
&&\sum_{k=1}^{\lfloor\frac{n}{2}\rfloor}\binom{n+1}{k}(k-1)(q-1)^{k-1} \leq\binom{n+1}{\lfloor\frac{n+2}{2}\rfloor}\left\lfloor\frac{n}{2}\right\rfloor(q-1)^{\lfloor\frac{n}{2}\rfloor}.\nonumber
\end{eqnarray}
This shows that $P(n+1)$ is also true. Thus, the result holds for all integers $n\geq 1$ and $q>2$. \qed
\end{proof}

\begin{theorem} \label{complexity_thm}
The computational complexity of Algorithm~\ref{Algorithm1} is 
$$O(2^{n+\frac{3}{2}+\frac{n-1}{2}\log_2(q-1)+\frac{3}{2}\log_2 n}).$$
\end{theorem}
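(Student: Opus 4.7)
The plan is to start from the explicit complexity expression given in Equation~(\ref{complexity_eqn}), namely $2n\sum_{k=1}^{\lfloor (n+1)/2\rfloor}\binom{n}{k}(k-1)(q-1)^{k-1}$, and to show it is bounded by $c\cdot n^{3/2}\cdot 2^n\cdot (q-1)^{(n-1)/2}$ for an absolute constant $c$; taking $\log_2$ of this expression then produces the exponent $n+\tfrac{3}{2}+\tfrac{n-1}{2}\log_2(q-1)+\tfrac{3}{2}\log_2 n$ (where the constant $3/2$ absorbs $c$ inside the $O$-notation). The main obstacle is that Lemma~\ref{Lemma_combinatorial} only controls the partial sum up to $k=\lfloor (n-1)/2\rfloor$, which falls one index short of the upper limit $\lfloor (n+1)/2\rfloor$ appearing in the complexity expression, so the boundary term must be treated separately.

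To overcome this, I would first split the sum as
\[
\sum_{k=1}^{\lfloor (n+1)/2\rfloor}\binom{n}{k}(k-1)(q-1)^{k-1}
=\sum_{k=1}^{\lfloor (n-1)/2\rfloor}\binom{n}{k}(k-1)(q-1)^{k-1}+\binom{n}{\lfloor\frac{n+1}{2}\rfloor}\bigl(\lfloor\tfrac{n+1}{2}\rfloor-1\bigr)(q-1)^{\lfloor\frac{n+1}{2}\rfloor-1}.
\]
A short parity check (for $n$ odd and for $n$ even separately) shows that $\lfloor (n+1)/2\rfloor-1=\lfloor (n-1)/2\rfloor$ in both cases, so the leftover boundary term is \emph{exactly} $\binom{n}{\lfloor (n+1)/2\rfloor}\lfloor (n-1)/2\rfloor(q-1)^{\lfloor (n-1)/2\rfloor}$, i.e.\ the same expression that Lemma~\ref{Lemma_combinatorial} uses to bound the preceding partial sum. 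Consequently the entire sum is at most $2\binom{n}{\lfloor (n+1)/2\rfloor}\lfloor (n-1)/2\rfloor(q-1)^{\lfloor (n-1)/2\rfloor}$.

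Next I would multiply by the outer factor $2n$ and apply the elementary estimates $\lfloor (n-1)/2\rfloor\le n/2$ and $(q-1)^{\lfloor (n-1)/2\rfloor}\le (q-1)^{(n-1)/2}$, yielding the intermediate bound $2n^2\binom{n}{\lfloor (n+1)/2\rfloor}(q-1)^{(n-1)/2}$. The central binomial coefficient satisfies $\binom{n}{\lfloor (n+1)/2\rfloor}=\binom{n}{\lfloor n/2\rfloor}\le 2^n/\sqrt{n}$ (a standard Stirling-based bound valid for all $n\ge 1$). Substituting produces
\[
2n \sum_{k=1}^{\lfloor (n+1)/2\rfloor}\binom{n}{k}(k-1)(q-1)^{k-1}\ \le\ c\cdot n^{3/2}\cdot 2^n\cdot (q-1)^{(n-1)/2}
\]
for some absolute constant $c$, and the $\log_2$ of the right-hand side matches the claimed exponent exactly.

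The only delicate piece of the argument is the parity observation that $\lfloor (n+1)/2\rfloor-1=\lfloor (n-1)/2\rfloor$ in both parities, which lets the isolated boundary term be absorbed cleanly into (twice) the bound from Lemma~\ref{Lemma_combinatorial}; everything else reduces to routine estimates on the middle binomial coefficient and on floor functions.
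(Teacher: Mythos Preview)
Your proposal is correct and follows essentially the same route as the paper: split off the top index $k=\lfloor (n+1)/2\rfloor$, apply Lemma~\ref{Lemma_combinatorial} to the remaining partial sum, observe (via the parity identity $\lfloor (n+1)/2\rfloor-1=\lfloor (n-1)/2\rfloor$) that the boundary term coincides with the lemma's bound so the whole sum is at most twice that quantity, and finish with the Stirling-type estimate $\binom{n}{\lfloor n/2\rfloor}\le 2^n/\sqrt{n}$. The only cosmetic difference is that the paper carries the floors a bit longer before passing to $n^{3/2}$, whereas you apply $\lfloor (n-1)/2\rfloor\le n/2$ one step earlier; the resulting constants are absorbed by the $O$-notation in both cases.
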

\begin{proof}
From (\ref{complexity_eqn}), we know the computational complexity of Algorithm~\ref{Algorithm1} is

\begin{equation*}
    \begin{aligned}
        & 2n\sum_{k=1}^{\lfloor\frac{n+1}{2}\rfloor}\binom{n}{k}(k-1)(q-1)^{k-1}.
    \end{aligned}
\end{equation*}

Now, we can write the above expression as:
\begin{eqnarray*}
&&2n\sum_{k=1}^{\lfloor\frac{n+1}{2}\rfloor}\binom{n}{k}(k-1)(q-1)^{k-1}\\
&=& 2n\sum_{k=1}^{\lfloor\frac{n-1}{2}\rfloor}\binom{n}{k}(k-1)(q-1)^{k-1} + 2n\binom{n}{\lfloor\frac{n+1}{2}\rfloor}\left\lfloor\frac{n-1}{2}\right\rfloor (q-1)^{\lfloor\frac{n-1}{2}\rfloor}.
\end{eqnarray*}

From Lemma~\ref{Lemma_combinatorial}, we have
\begin{equation}\label{complexity_thm_eqn_1}
\begin{aligned}
    2n\sum_{k=1}^{\lfloor\frac{n+1}{2}\rfloor}\binom{n}{k}(k-1)(q-1)^{k-1} 
    & \leq 4n\binom{n}{\lfloor\frac{n+1}{2}\rfloor}\left\lfloor\frac{n-1}{2}\right\rfloor (q-1)^{\lfloor\frac{n-1}{2}\rfloor}\\
    & = 4n\binom{n}{\lfloor\frac{n}{2}\rfloor}\left\lfloor\frac{n-1}{2}\right\rfloor (q-1)^{\lfloor\frac{n-1}{2}\rfloor}
\end{aligned}
\end{equation}

Now, we will employ the following inequality~\footnote{By the Stirling's approximation, we have $r! \approx \sqrt{2\pi r} \left( \frac{r}{e} \right)^r$. Thus, $${2r \choose r} \approx \frac{\sqrt{4\pi r} \left( \frac{2r}{e} \right)^{2r}}{\sqrt{2\pi r} \left( \frac{r}{e} \right)^r \cdot \sqrt{2\pi r} \left( \frac{r}{e} \right)^r}= \frac{2^{2r}}{\sqrt{\pi r}}< \frac{2^{2r}}{\sqrt{r}}.$$}:

\begin{equation}\label{complexity_thm_eqn_2}
{2r \choose r} < \frac{2^{2r}}{\sqrt{r}},
\end{equation}
where $r$ is any positive integer.

Therefore, from (\ref{complexity_thm_eqn_1}) and (\ref{complexity_thm_eqn_2}), we have

\begin{eqnarray*}
2n\sum_{k=1}^{\lfloor\frac{n+1}{2}\rfloor}\binom{n}{k}(k-1)(q-1)^{k-1}&<& 4n\frac{2^{n}}{\sqrt{\lfloor\frac{n}{2}\rfloor}}\left\lfloor\frac{n-1}{2}\right\rfloor (q-1)^{\lfloor\frac{n-1}{2}\rfloor}\nonumber\\
&<& 2^{n+3/2}n^{3/2} (q-1)^{\frac{n-1}{2}}\nonumber\\
&=& 2^{n+\frac{3}{2}+\frac{n-1}{2}\log_2(q-1)+\frac{3}{2}\log_2 n}.
\end{eqnarray*}
Therefore, we estimate that
\begin{equation*}
2n\sum_{k=1}^{\lfloor\frac{n+1}{2}\rfloor}\binom{n}{k}(k-1)(q-1)^{k-1}=O(2^{n+\frac{3}{2}+\frac{n-1}{2}\log_2(q-1)+\frac{3}{2}\log_2 n}).
\end{equation*}
This completes the proof. \qed
\end{proof}

We will now theoretically demonstrate that the complexity of Algorithm~\ref{Algorithm1} is significantly lower than that of the naive approach when $n,q\geq 4$.

\begin{theorem} \label{thm:4.3}
For $n,q\geq 4$, the complexity of Algorithm~\ref{Algorithm1} is significantly lower than that of the naive approach.
\end{theorem}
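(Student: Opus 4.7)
The plan is to form the ratio of the naive complexity from Proposition~\ref{prp_complexity_naive} to the bound on Algorithm~\ref{Algorithm1}'s complexity from Theorem~\ref{complexity_thm}, and to show this ratio grows exponentially in $n$ whenever $n,q\geq 4$.

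First I would rewrite both complexities as powers of $2$ and subtract exponents. With $N(n,q)=n^2q^n=2^{2\log_2 n+n\log_2 q}$ for the naive bound and $A(n,q)=2^{n+3/2+\frac{n-1}{2}\log_2(q-1)+\frac{3}{2}\log_2 n}$ for the algorithmic bound, grouping the $n$-linear and the constant contributions yields
\[
\Delta(n,q):=\log_2\frac{N(n,q)}{A(n,q)} = n\,f(q)+\tfrac{1}{2}\log_2 n+\tfrac{1}{2}\log_2(q-1)-\tfrac{3}{2},
\]
where $f(q):=\log_2 q-1-\tfrac{1}{2}\log_2(q-1)=\log_2\bigl(q/(2\sqrt{q-1})\bigr)$.

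The decisive step is to verify that $f(q)>0$ for all $q\geq 4$. This reduces to the elementary inequality $q>2\sqrt{q-1}$, which squares to $(q-2)^2>0$ and hence holds for every $q\neq 2$. A one-line derivative check, $f'(q)=\bigl(\tfrac{1}{q}-\tfrac{1}{2(q-1)}\bigr)/\ln 2>0$ for $q>2$, shows that $f$ is strictly increasing on $[2,\infty)$, so $f(q)\geq f(4)=1-\tfrac{1}{2}\log_2 3\approx 0.207$ for all $q\geq 4$. The lower-order term is then handled by a direct monotonicity argument: for $n,q\geq 4$ one has $\tfrac{1}{2}\log_2 n+\tfrac{1}{2}\log_2(q-1)-\tfrac{3}{2}\geq 1+\tfrac{1}{2}\log_2 3-\tfrac{3}{2}=\tfrac{1}{2}(\log_2 3-1)>0$, so this piece is nonnegative and can be safely dropped.

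Combining these observations gives $\Delta(n,q)\geq n\,f(4)$ for every $n,q\geq 4$, hence
\[
\frac{N(n,q)}{A(n,q)}\;\geq\;2^{n\,f(4)},
\]
an unbounded exponential separation in $n$; moreover, since $f(q)\sim \tfrac{1}{2}\log_2 q$ for large $q$, the separation also grows with $q$, recovering the abstract's informal claim that Algorithm~\ref{Algorithm1} essentially replaces $q^n$ by $\sqrt{q^n}$. The only real obstacle I foresee is bookkeeping in the exponent manipulation; the substantive fact driving the whole argument is the inequality $q>2\sqrt{q-1}$, i.e.\ $(q-2)^2>0$, which is precisely the ``square-root'' observation at the heart of the comparison.
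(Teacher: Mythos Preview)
Your argument is correct and follows the same high-level strategy as the paper: write both complexities as powers of $2$, form the difference of exponents, and show that difference is positive for $n,q\geq 4$. The organization of the algebra differs, however. The paper groups the difference as
\[
\tfrac{n-1}{2}\log_2\tfrac{q}{q-1}+\tfrac{n+1}{2}\log_2 q-\bigl(n+\tfrac{3}{2}\bigr)+\tfrac{1}{2}\log_2 n,
\]
observes monotonicity in $n$, takes a partial derivative in $q$ to get monotonicity for $q>2$, and then checks the corner value at $(n,q)=(4,4)$. Your decomposition $\Delta(n,q)=n\,f(q)+\text{(nonnegative remainder)}$ isolates the $n$-linear coefficient $f(q)=\log_2\bigl(q/(2\sqrt{q-1})\bigr)$ and reduces its positivity to the one-line identity $(q-2)^2>0$. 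This buys you more than the paper's proof: you obtain an explicit exponential lower bound $N/A\geq 2^{n f(4)}$ rather than merely $\Delta>0$, and the appearance of $2\sqrt{q-1}$ makes the ``square-root'' speedup claimed in the abstract visible directly in the exponent. The paper's monotonicity-plus-corner argument is a bit more mechanical but yields only the qualitative conclusion.
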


\begin{proof}
Proposition~\ref{prp_complexity_naive} states that the complexity of the naive approach is $O(n^2q^n)$, which is equivalent to $O(2^{n\log_2 q +2\log_2 n})$. On the other hand, according to Theorem~\ref{complexity_thm}, the complexity of Algorithm~\ref{Algorithm1} is $O(2^{n+\frac{3}{2}+\frac{n-1}{2}\log_2(q-1)+\frac{3}{2}\log_2 n})$.\\

Now let us define 
\begin{eqnarray*}
f(n,q)&=&(n\log_2 q +2\log_2 n)-(n+\frac{3}{2}+\frac{n-1}{2}\log_2(q-1)+\frac{3}{2}\log_2 n)\\
&=&\frac{n-1}{2}\log_2 q+\frac{n+1}{2}\log_2 q-(n+\frac{3}{2}+\frac{n-1}{2}\log_2(q-1))+\frac{1}{2}\log_2 n\\
&=&\frac{n-1}{2}\log_2 \frac{q}{q-1}+\frac{n+1}{2}\log_2 q-\left(n+\frac{3}{2}\right)+\frac{1}{2}\log_2 n.
\end{eqnarray*}

We observe that this expression is increasing in $n$ for all $n>0$. We show that it also increases in $q$ when $q>2$.

We differentiate the function $f(n,q)$ partially with respect to $q$ to get
\begin{eqnarray*}
\frac{\partial}{\partial q}f(n,q)&=&-\frac{n-1}{2}\left(\frac{q-1}{q}.\frac{1}{(q-1)^2}\right)+\frac{n+1}{2q}\\
&=&\frac{n-1}{2q}\left(1-\frac{1}{q-1}\right)+\frac{1}{q}.
\end{eqnarray*}
This is positive for $n\geq 1$ and $q>2$ and hence $f(n,q)$ is increasing. Furthermore, We note that\\
$$f(4,4)=\frac{3}{2}\log_2 \frac{4}{3}+\frac{5}{2}\log_2 4-\left(4+\frac{3}{2}\right)+\frac{1}{2}\log_2 4$$ $$=\frac{3}{2}\log_2 \frac{4}{3}+\frac{1}{2}>0. \hspace{22mm}$$
Hence $f(n,q)> 0$ for all $n\geq 4$ and $q\geq 4$.
Thus, the theorem is proved. \qed
\end{proof}


\subsection{A Special Type of Non-Singular Matrices}
For an involutory matrix $M$, we have $M^{-1}=M$. Also, for the Hadamard matrix~\footnote{A $2^n \times 2^n$ matrix $M$ is Hadamard matrix in $\mathbb{F}_{2^r}$ if it can be expressed in the form:
$\begin{bmatrix}
U & V \\
V & U
\end{bmatrix}$,
where the submatrices $U$ and $V$ are also Hadamard matrices.} over characteristic $2$, we have $M^{-1}= \frac{1}{c^2} M$, where $c$ is the sum of the elements in the first row. Therefore, for a non-singular Hadamard matrix or any involutory matrix, we have $w_h(Mx)= w_h(M^{-1}x)$. Thus, the complexity of our proposed algorithm will be further reduced. Now, step \ref{mat_mul} of Algorithm \ref{Algorithm1} requires only one matrix multiplication, which will take $n(k-1)$ multiplications. Thus, for a non-singular Hadamard or any involutory matrix, the computational complexity is of the order of 

\begin{equation*}
n\sum_{k=1}^{r}o(S_k)\cdot(k-1)=n\sum_{k=1}^{\left\lfloor\frac{n+1}{2}\right\rfloor}\binom{n}{k}(k-1)(q-1)^{k-1}.
\end{equation*}

Now, similar to the proof of Theorem \ref{complexity_thm}, we can use the above expression to determine the computational complexity for a non-singular Hadamard matrix or any involutory matrix as follows:

\begin{theorem}
The computational complexity of Algorithm~\ref{Algorithm1} for a non-singular Hadamard or any involutory matrix is 
$$O(2^{n+\frac{1}{2}+\frac{n-1}{2}\log_2(q-1)+\frac{3}{2}\log_2 n}).$$
\end{theorem}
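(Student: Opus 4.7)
The plan is to mirror the proof of Theorem~\ref{complexity_thm} almost verbatim, exploiting the new observation that for an involutory matrix $M^{-1}=M$ and for a Hadamard matrix over characteristic $2$ we have $M^{-1}=\frac{1}{c^2}M$. In both cases $w_h(Mx)=w_h(M^{-1}x)$, so in Step~\ref{mat_mul} of Algorithm~\ref{Algorithm1} a single matrix-vector multiplication suffices rather than two. This replaces the per-iteration cost $2n(k-1)$ by $n(k-1)$, yielding the total cost
\[
n\sum_{k=1}^{\lfloor\frac{n+1}{2}\rfloor}\binom{n}{k}(k-1)(q-1)^{k-1},
\]
which is exactly half of the expression analyzed in Theorem~\ref{complexity_thm}.

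From here I would follow the same three-step bookkeeping as in the proof of Theorem~\ref{complexity_thm}. First, I split off the $k=\lfloor(n+1)/2\rfloor$ summand and apply Lemma~\ref{Lemma_combinatorial} to the truncated sum, giving
\[
n\sum_{k=1}^{\lfloor\frac{n+1}{2}\rfloor}\binom{n}{k}(k-1)(q-1)^{k-1}
\leq 2n\binom{n}{\lfloor n/2\rfloor}\left\lfloor\tfrac{n-1}{2}\right\rfloor (q-1)^{\lfloor (n-1)/2\rfloor},
\]
using $\binom{n}{\lfloor(n+1)/2\rfloor}=\binom{n}{\lfloor n/2\rfloor}$. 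Second, I invoke the central binomial bound $\binom{2r}{r}<2^{2r}/\sqrt{r}$ recorded in (\ref{complexity_thm_eqn_2}) to estimate $\binom{n}{\lfloor n/2\rfloor}<2^n/\sqrt{\lfloor n/2\rfloor}$. Third, collecting the surviving polynomial factors via $n/\sqrt{\lfloor n/2\rfloor}\leq \sqrt{2n}$ and $\lfloor(n-1)/2\rfloor\leq n/2$, I obtain
\[
n\sum_{k=1}^{\lfloor\frac{n+1}{2}\rfloor}\binom{n}{k}(k-1)(q-1)^{k-1} < 2^{n+\frac{1}{2}} n^{3/2}(q-1)^{(n-1)/2},
\]
which rewrites as $O\bigl(2^{n+\frac{1}{2}+\frac{n-1}{2}\log_2(q-1)+\frac{3}{2}\log_2 n}\bigr)$ as claimed.

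I do not anticipate any real obstacle: the proof is structurally identical to Theorem~\ref{complexity_thm}, and the entire novelty lies in the opening observation that halves the leading constant. The only care-point is tracking how the constant $2^{3/2}$ in the exponent of Theorem~\ref{complexity_thm} becomes $2^{1/2}$ here; this drop of exactly $1$ in the exponent corresponds precisely to $\log_2 2 = 1$, i.e., to saving one of the two matrix multiplications per element of $S_k$. Everything else, including the use of Lemma~\ref{Lemma_combinatorial} and the Stirling-based binomial estimate, carries over without change.
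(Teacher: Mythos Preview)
Your approach is exactly what the paper does: it states that the proof proceeds ``similar to the proof of Theorem~\ref{complexity_thm}'' starting from the halved cost $n\sum_{k}\binom{n}{k}(k-1)(q-1)^{k-1}$, and you have correctly supplied those details.

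One small technical slip: the intermediate inequality $n/\sqrt{\lfloor n/2\rfloor}\leq\sqrt{2n}$ is false for odd $n$ (e.g.\ $n=3$ gives $3>\sqrt{6}$). The final bound is nevertheless correct because the \emph{product} you actually need satisfies
\[
\frac{\lfloor(n-1)/2\rfloor}{\sqrt{\lfloor n/2\rfloor}}\leq\sqrt{n/2}
\]
in both parities (for odd $n$ it equals $\sqrt{(n-1)/2}$, for even $n$ it is at most $\sqrt{n/2}$), and this yields $2n\cdot 2^n\cdot\sqrt{n/2}=2^{n+1/2}n^{3/2}$ directly. So bound the quotient as a whole rather than the two factors separately; with that fix your argument is complete and matches the paper.
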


\noindent To validate the effectiveness of our algorithm, we conducted extensive computational experiments on various non-singular matrices over different finite fields using (\ref{complexity_eqn}). The results demonstrate a notable improvement in computational efficiency compared to an exhaustive approach, especially for large matrices and finite fields. Furthermore, the advantage of the proposed algorithm over the exhaustive approach can be quantitatively seen in Table \ref{tab:exp_results} for different values of $n$ and $q$. The size $q$ of the finite field for every value of $n$ is divided into $2^8$ and $2^{16}$.\\

Table \ref{tab:exp_results} shows that the number of field multiplications required for the branch number computation is significantly reduced compared to the exhaustive approach. Using this algorithm, the computation for the branch number for certain values of $n$ and $q$ (e.g., for $n=6$ and $q=2^{16}$) can be done in real-time, which would otherwise not be computationally feasible through the exhaustive approach.

\begin{remark}
It is easy to verify that our algorithm and its mathematical foundations apply to both differential and linear branch numbers (see Definitions~\ref{Def_diff_branch} and ~\ref{Def_lin_branch}). This can be done by simply changing $M$ to $M^T$.
\end{remark}

\begin{table}[h] 
\centering
\renewcommand{\arraystretch}{1.25}
\caption{Comparison of the computational complexity of the proposed algorithm over the exhaustive approach}\label{tab:exp_results}
\begin{tabular}{|c|c|c|c|} 
\hline
&  & \multicolumn{2}{p{4.8cm}|}{\centering {\bf Number of field multiplications}}\\
\cline{3-4}
{\bf Matrix size $(n)$} &{\bf Field size $(q)$}	& \multicolumn{1}{p{2.4cm}|}{\centering {\bf Exhaustive Approach}}
& \multicolumn{1}{p{2.4cm}|}{\centering{\bf Proposed Algorithm}}\\
\hline
\multirow{2}{1.5cm}{\centering 4} 
&$2^8$ &$2^{36}$ & $2^{13.58}$ \\
\cline{2-4}
&$2^{16}$ &$2^{68}$ & $2^{21.58}$\\
\hline	
\multirow{2}{1.5cm}{\centering 5} &$2^{8}$ &$2^{44.64}$ & $2^{23.63}$\\
\cline{2-4}
&$2^{16}$ &$2^{84.64}$ & $2^{39.64}$\\
\hline
\multirow{2}{1.5cm}{\centering 6} &$2^{8}$ &$2^{53.17}$ & $2^{24.89}$\\
\cline{2-4}
&$2^{16}$ &$2^{101.17}$ & $2^{40.91}$\\
\hline
\multirow{2}{1.5cm}{\centering 7} &$2^{8}$ &$2^{61.61}$ & $2^{34.51}$\\
\cline{2-4}
&$2^{16}$ &$2^{117.62}$ & $2^{58.52}$\\
\hline
\multirow{2}{1.5cm}{\centering 8} &$2^{8}$ &$2^{70}$ & $2^{35.70}$\\
\cline{2-4}
&$2^{16}$ &$2^{134}$ & $2^{59.71}$\\
\hline
\end{tabular}
\end{table}

\section{Conclusion} \label{sec:conclusion}
This paper presents an algorithm designed to compute the branch number of a non-singular matrix over a finite field. It proceeds to analyze the complexity of the proposed algorithm and demonstrates a notable reduction in computational complexity compared to the naive approach, approximately to the square root of the original requirement. Through detailed complexity analysis and a comparative table, it illustrates the significant improvement offered by the proposed algorithm over the naive approach. Additionally, given the connection highlighted in Remark~\ref{Remark_min_dis_is_branch_number} between the minimum distance of a linear code and the branch number, this advancement has the potential to offer insight into error-correcting codes, as well as its impact on symmetric key cryptographic primitives.

\subsubsection*{Acknowledgments:} We would like to express our sincere gratitude to the anonymous reviewers for their detailed and insightful feedback, which significantly contributed to the improvement of this article.

\bibliographystyle{acm}
\bibliography{ref}

\end{document}